\font\twlgot =eufm10 scaled \magstep1
\font\egtgot =eufm8 \font\sevgot =eufm7
\font\twlmsb =msbm10 scaled \magstep1 \font\egtmsb =msbm8
\font\sevmsb =msbm7
\def\pgot{\fam\gotfam\twlgot}
\def\got{\protect\pgot}
\def\Bbb{\protect\pBbb}
\def\pBbb{\relax\ifmmode\expandafter\Bb\else\typeout{You cann't use
Bbb in text mode}\fi}
\def\Bb #1{{\fam\msbfam\relax#1}}
\def\op#1{\mathop{{\it\fam0} #1}\limits}
\newcommand{\id}{{\rm Id\,}}
\newcommand{\di}{{\rm dim\,}}
\newcommand{\hm}{{\rm Hom\,}}
\newcommand{\dif}{{\rm Diff\,}}
\newcommand{\beq}{\begin{equation}}
\newcommand{\eeq}{\end{equation}}
\newcommand{\ben}{\begin{eqnarray}}
\newcommand{\een}{\end{eqnarray}}
\newcommand{\be}{\begin{eqnarray*}}
\newcommand{\ee}{\end{eqnarray*}}
\newcommand{\bea}{\begin{eqalph}}
\newcommand{\eea}{\end{eqalph}}
\newcommand{\cO}{{\cal O}}
\newcommand{\cA}{{\cal A}}
\newcommand{\gd}{{\got d}}
\newcommand{\gA}{{\got A}}
\newcommand{\cT}{{\cal T}}
\newcommand{\cV}{{\cal V}}
\newcommand{\cZ}{{\cal Z}}
\newcommand{\cK}{{\cal K}}
\newcommand{\dl}{\delta}
\newcommand{\la}{\lambda}
\newcommand{\La}{\Lambda}
\newcommand{\f}{\phi}
\newcommand{\Om}{\Omega}
\newcommand{\m}{\mu}
\newcommand{\g}{\gamma}
\newcommand{\e}{\epsilon}
\newcommand{\ve}{\varepsilon}
\newcommand{\th}{\theta}
\newcommand{\bb}{{\bf 1}}
\newcommand{\w}{\wedge}
\newcommand{\wh}{\widehat}
\newcommand{\dr}{\partial}
\newcommand{\ar}{\op\longrightarrow}
\newcommand{\ot}{\otimes}
\newenvironment{eqalph}{\stepcounter{equation}
\setcounter{equationa}{\value{equation}} \setcounter{equation}{0}

\begin{eqnarray}}{\end{eqnarray}\setcounter{equation}{\value{equationa}}}
\newcounter{example}
\newcounter{remark}
\newcounter{theorem}
\newcounter{proposition}
\newcounter{lemma}
\newcounter{corollary}
\newcounter{definition}
\def\theremark{\arabic{remark}}
\def\thetheorem{\arabic{theorem}}
\def\thedefinition{\arabic{definition}}
\newenvironment{proof}{{\bf Proof.}}{
\hfill $\Box$ \medskip }
\newenvironment{ex}{\refstepcounter{remark} \medskip {\bf Example
\theremark.} }{ \medskip }
\newenvironment{lem}{\refstepcounter{theorem} \medskip{\bf Lemma
\thetheorem.}\it }{\medskip }
\newenvironment{defi}{\refstepcounter{definition} \medskip{\bf
Definition \thedefinition.} \it}{\medskip }
\newcommand{\mar}[1]{}
\begin{document}
\hbox{}

{\parindent=0pt

{\large \bf Differential operators on Lie and graded Lie algebras}
\bigskip

{\sc G. Sardanashvily}

\medskip

{\it Department of Theoretical Physics, Moscow State University,
117234 Moscow, Russia}

E-mail: gennadi.sardanashily@unicam.it

\bigskip

\bigskip

\begin{small}

 {\bf Abstract.} Theory of differential operators on associative algebras
is not extended to the non-associative ones in a straightforward
way. We consider differential operators on Lie algebras. A key
point is that multiplication in a Lie algebra is its derivation.
Higher order differential operators on a Lie algebra are defined
as composition of the first order ones. The Chevalley--Eilenberg
differential calculus over a Lie algebra is defined. Examples of
finite-dimensional Lie algebras, Poisson algebras, algebras of
vector fields, and algebras of canonical commutation relations are
considered. Differential operators on graded Lie algebras are
defined just as on the Lie ones.
\medskip

\end{small}

}

\section{Introduction}

There is the conventional notion of (linear) differential
operators on commutative rings \cite{book05,grot,kras,sard9}. This
notion is straightforwardly extended to graded commutative rings
in supergeometry \cite{book05,sard9}, but its generalization to
non-commutative rings in non-commutative geometry is not unique
\cite{bor97,dublmp,dub01,book05,lunts,sard9}. The underlying
problem is that, being first order differential operators,
derivations of a non-commutative ring $\cA$ fail to form an
$\cA$-module. At the same time, all the existed definitions of
differential operators on rings (i.e., unital associative algebras
with $\bb\neq 0$) follow that on commutative rings (see
Definitions \ref{ws131} and \ref{s93} below).

Let $\cK$ be a commutative ring and $\cA$ a commutative
$\cK$-ring. Note that any associative $\cK$-algebra $\cA$ can be
extended to a ring by the adjunction of the unit element $\bb$ to
$\cA$. Let $\Phi\in\hm_\cK(\cA,\cA)$ be an endomorphism of a
$\cK$-module $\cA$. Given an element $a\in \cA$, let us define an
endomorphism
\mar{spr172}\beq
(\dl_a\Phi)(c)= a\Phi(c) -\Phi(ac), \qquad \forall c\in\cA,
\label{spr172}
\eeq
of a $\cK$-module $\cA$ and its endomorphism
$\dl_{a_0}\circ\cdots\circ\dl_{a_k}\Phi$ for any tuple of elements
$a_0,\ldots,a_k$ of $\cA$.

\begin{defi} \label{ws131} \mar{ws131}
An element $\Delta\in\hm_\cK(\cA,\cA)$ is called a $k$-order
differential operator on a commutative $\cK$-ring $\cA$ if
\mar{s5a}\beq
\dl_{a_0}\circ\cdots\circ\dl_{a_k}\Delta=0 \label{s5a}
\eeq
for any tuple of $k+1$ elements $a_0,\ldots,a_k$ of $\cA$.
\end{defi}

In particular, a zero order differential operator $\Phi\in
\dif_0(A)$ on $A$ obeys the condition $\dl_a\Phi=0$, $\forall a\in
A$, i.e., it is an $\cA$-module endomorphism of $\cA$. Therefore,
all zero order differential operators on $\cA$ are multiplications
in $\cA$. A first order differential operator $\Delta$ on a
commutative ring $A$ satisfies the condition
\mar{ws106}\beq
\dl_b\circ\dl_a \Delta= 0, \qquad \forall a,b\in\cA. \label{ws106}
\eeq
For instance, any derivation of a commutative ring $\cA$ obeys the
relation (\ref{ws106}) and, consequently, it is a first order
differential operator. Moreover, any first order differential
operator $\Delta\in \dif_1(A)$ on $\cA$ takes the form
\mar{s95'}\beq
\Delta(b)=\dr b +ab, \qquad \forall b\in\cA, \label{s95'}
\eeq
where $\dr$ is a derivation of $\cA$ and $a\in \cA$.

Let $P$ be a bimodule over a commutative ring $\cA$. Let
$\Phi\in\hm_\cK(P,P)$ be an endomorphism of a $\cK$-module $P$.
Given an element $a\in \cA$, let us define an endomorphism
\be
(\dl_a\Phi)(p)= a\Phi(p) -\Phi(ap), \qquad \forall p\in P,
\ee
of a $\cK$-module $P$ and its endomorphism
$\dl_{a_0}\circ\cdots\circ\dl_{a_k}\Phi$ for any tuple of elements
$a_0,\ldots,a_k$ of $\cA$.

\begin{defi} \label{s93} \mar{s93}
An element $\Delta\in\hm_\cK(P,P)$ is called a $k$-order
$P$-valued differential operator on an $\cA$-module $P$ if
\be
\dl_{a_0}\circ\cdots\circ\dl_{a_k}\Delta=0
\ee
for any tuple of $k+1$ elements $a_0,\ldots,a_k$ of $\cA$.
\end{defi}

In particular, a zero order differential operator $\Phi\in
\dif_0(P)$ on $P$ is an $\cA$-module endomorphism of $P$. A first
order differential operator $\Delta\in\dif_1(P)$ on $P$ is a
$\cK$-module endomorphism of $P$ which obeys the condition
\mar{s95}\beq
\Delta(ap)=(\dr a)p +\Delta(p), \qquad \forall a\in \cA.
\label{s95}
\eeq

It should be emphasized that derivations of rings also are first
order differential equations in supergeometry and non-commutative
geometry. Therefore, a definition of differential operators on Lie
algebras must treat derivations of Lie algebras as first order
differential operators, too. However, Definition \ref{ws131} fails
to be generalized to non-associative algebras because their
derivations do not satisfy the condition (\ref{ws106}). Therefore,
a comprehensive definition of differential operators on
non-associative algebras fails to be formulated.

This work addresses differential operators on Lie and $\Bbb
Z_2$-graded Lie algebras and modules over these algebras. A key
point is that multiplications in these algebras are their
derivations, i.e., first order differential operators. We restrict
our consideration of differential operators on Lie and graded Lie
algebras to compositions of first order differential operators
(see Definition \ref{s20} below). However, it may happen that
there exist other operators which can be treated as higher order
differential operators (see Examples \ref{s16} and \ref{s41}).

There are different variants of differential calculus over a ring
\cite{conn,land}. Following a notion of the Chevalley--Eilenberg
differential calculus over a ring \cite{dub01,book05,sard9}, we
define the Chevalley--Eilenberg differential calculus on a Lie
algebra (see Definition \ref{s50} below). If a Lie algebra has a
zero center and all its derivations are inner, this differential
calculus coincides with the well-known Chevalley--Eilenberg
complex of this algebra \cite{fuks}. If a Lie algebra is
finite-dimensional, this complex describes the matrix geometry
(see Example \ref{s55} below). This also is the case of a
finite-dimensional graded Lie algebra in Example \ref{s120}.

Physically relevant examples of finite-dimensional Lie and graded
Lie algebras, Poisson algebras, algebras of vector and graded
vector fields, algebras of canonical commutation and
anticommutation relations are considered.

\section{Derivations of Lie algebras}

Let $\cK$ be a commutative ring and $A$ a Lie algebra over $\cK$.
Let the symbol $\cdot$ stand for the multiplication in this
algebra such that its standard properties read
\mar{ss1}\beq
a\cdot b=-b\cdot a,\qquad a\cdot (b\cdot c) + b\cdot (c\cdot a) +
c\cdot (a\cdot b) =0, \qquad \forall a,b,c,\in A. \label{ss1}
\eeq
One also regards $A$ as an $(A-A)$-module where left and right
$A$-module structures are related by the condition
\be
ap=a\cdot p=-p\cdot a=-pa, \qquad \forall p,a\in A.
\ee

\begin{defi} A derivation of a Lie algebra $A$ is defined as its $\cK$-module
automorphism $\dr$ which obeys the Leibniz rule
\be
\dr(a\cdot b)=\dr a\cdot b +a\cdot \dr b, \qquad \forall a,b\in A.
\ee
\end{defi}

For instance, any left multiplication
\mar{s1}\beq
c: A\ni a\to c\cdot a\in A \label{s1}
\eeq
in $A$ is its inner derivation by virtue of the Jacobi identity
(\ref{ss1}):
\be
c\cdot (a\cdot b)= (c\cdot a)\cdot b + a\cdot (c\cdot b).
\ee
Certainly, any right multiplication also is well. Let $\dr_c$
further denote the inner derivation (\ref{s1}).

Derivations of a Lie algebra $A$ constitute a Lie algebra $\gd A$
over a ring $\cK$ with respect to the Lie bracket
\be
[\dr,\dr']=\dr\circ\dr'-\dr'\circ\dr.
\ee
For instance, the Lie bracket of inner derivations reads
\be
[\dr_a,\dr_b]=\dr_{a\cdot b}.
\ee
Thus, there exists a Lie algebra homomorphism $A\ni a\to
\dr_a\in\gd A$ whose kernel is the center $\cZ[A]$ of $A$
consisting of devisors of zero of $A$ (i.e., elements $a\in A$
such that $a\cdot b=0$ for all $b\in A$).

Let us consider a few physically relevant examples.

\begin{ex} \label{s3} \mar{s3}
If $A$ is an $N$-dimensional Lie algebra, the Lie algebra $\gd A$
of its derivations is of dimension $\di\gd A\leq N^2$. For
instance, $\gd gl(k,\Bbb R)=sl(k,\Bbb R)$, $\gd sl(k,\Bbb
R)=sl(k,\Bbb R)$ and $\gd o(k)=o(k)$, i.e., all derivations of
these Lie algebras are inner, and the algebras $sl(k,\Bbb R)$ and
$o(k)$ have a zero center.
\end{ex}

\begin{ex} \label{s6} \mar{s6}
Let $X$ be an $n$-dimensional smooth real manifold and $\cT_1(X)$
a real Lie algebra of vector fields on $X$ with respect to the Lie
bracket
\be
u\cdot v=[u,v], \qquad u,v\in \cT_1(X).
\ee
All derivations of a Lie algebra $\cT_1(X)$ are inner and this
algebra has no divisors of zero. Consequently, $\gd \cT_1(X)=
\cT_1(X)$.
\end{ex}

\begin{ex} \label{s6a} \mar{s6a}
Let $Y\to X$ be a smooth fibre bundle and $\cV(Y)$ a real Lie
algebra of vertical vector fields on $Y$. Then $\gd \cV(Y)$ is a
Lie algebra of projectable vector fields on $Y$.
\end{ex}

\begin{ex} \label{s7} \mar{s7} Let $(Z,w)$ be a Poisson
manifold endowed with a Poisson bivector $w$. Let $C^\infty(Z)$ be
a real Poisson algebra of smooth real functions on $Z$ with
respect to a Poisson bracket
\mar{s8}\beq
f\cdot g=\{f,g\}, \qquad f,g\in C^\infty(Z). \label{s8}
\eeq
Its derivations need not be inner. For instance, let $(Z=\Bbb
R^{2m},w)$ be a symplectic manifold which is coordinated by
$(q^i,p_i)$, $i=1,\ldots,m$, and provided with the symplectic
structure
\be
\Om=dp_i\w dq^i, \qquad \{f,g\}=\dr^if\dr_ig-\dr_if\dr^ig, \qquad
f,g\in C^\infty(Z).
\ee
Then $\dr_1:f\to \dr_1f$ is a non-inner derivation of the Poisson
algebra $C^\infty(Z)$ (\ref{s8}).
\end{ex}

\begin{ex} \label{s11} \mar{s11}
Let $V$ be a real vector space, which need not be
finite-dimensional. Let us consider the space $W=V\oplus V\oplus
\Bbb R$. We denote its elements by $w=(\pi,\f,\la)$. Let $<v,v'>$
be a scalar product (a positive non-degenerate bilinear form) on
$V$. Then one brings $W$ into a real Lie algebra with respect to
the bracket
\mar{s9}\beq
w\cdot w'=[w,w']=-[w',w]=(<\pi,\f'>  - <\pi',\f>)\bb, \qquad
\forall w,w'\in W. \label{s9}
\eeq
It is readily observed that the Jacobi identity of this Lie
algebra is trivial, i.e.,
\mar{s10}\beq
w\cdot w'\cdot w''=0, \qquad \forall  w,w',w''\in W. \label{s10}
\eeq
The Lie algebra (\ref{s9}) is called the algebra of canonical
commutation relations (henceforth CCR). Its inner derivations read
\be
\dr_w(\pi',\f',\la')= (0,0,<\pi,\f'> - <\pi',\f>), \quad
w=(\pi,\f,\la)\in W, \quad \forall (\pi',\f',\la')\in W.
\ee
A generic derivation of the CCR algebra $W$ (\ref{s9}) takes the
form
\mar{s99}\beq
\dr w=(M_1(\pi)+O_1(\f),O_2(\pi) +M_2(\f),C_1(\pi)+C_2(\f)),
\label{s99}
\eeq
where $C_1$ and $C_2$ are linear functions on $V$, and $M_1$,
$M_2$, $O_1$ and $O_2$ are endomorphisms of $V$ such that
\be
&& <M_1(v),v'>+ <v,M_2(v')>=0, \qquad \forall v,v'\in V,\\
&& <O_1(v),v'> -<v,O_1(v')>=0, \qquad <O_2(v),v'> -<v,O_2(v')>=0.
\ee
\end{ex}

\section{Differential operators on Lie algebras}

We consider the following class of differential operators on a Lie
$\cK$-algebra $A$.

\begin{defi} \label{s14} \mar{s14} A zero order differential operator $\Phi$ on a Lie algebra
$A$ is defined as an endomorphism of an $(A-A)$-module $A$, i.e.,
\be
\Phi(a\cdot b)=a\cdot\Phi(b)=\Phi(a)\cdot b.
\ee
\end{defi}

It may happen that zero order differential operators on a Lie
algebra $A$ are reduced to multiplications
\mar{s35}\beq
A\ni a\to\la a\in A, \qquad \la\in \cK, \label{s35}
\eeq
i.e. $\dif_0(A)=\cK$. This is the case of finite-dimensional Lie
algebra in Example \ref{s3}, a Lie algebra of vector fields
$\cT_1(X)$ in Example \ref{s6}, a CCR algebra in Example
\ref{s11}, and a Poisson algebra $C^\infty(Z)$ on a symplectic
manifold $Z$ in Example \ref{s7}. If $(Z,w)$ is a non-symplectic
Poisson manifold, zero order differential operators on a Poisson
algebra $C^\infty(Z)$ are not exhausted by the multiplications
(\ref{s35}). Namely, if $c(z)$ are Casimir functions, the
morphisms
\be
C^\infty(Z)\ni f(z)\to c(z)f(z)\in C^\infty(Z)
\ee
are zero order differential operators on a Poisson algebra
$C^\infty(Z)$, i.e., $\dif_0(A)=\cZ[C^\infty(Z)]$.

\begin{defi} \label{s15} \mar{s15} A first order differential operator on a Lie algebra
$A$ is defined as a sum $\dr +\Phi$ of a derivation $\dr$ of a Lie
algebra $A$ and a zero order differential operator $\Phi$ on $A$
(cf. the formula (\ref{s95'})).
\end{defi}

This definition is based on the following facts.

\begin{lem} \label{s24} \mar{s24}
Composition of a zero order differential operator on a Lie algebra
and a derivation of this Lie algebra is a first order differential
operator in accordance with Definition \ref{s15}. The bracket of a
zero order differential operator on a Lie algebra and a derivation
of this Lie algebra is a zero order differential operator.
\end{lem}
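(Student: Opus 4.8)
The plan is to reduce both assertions to the single observation that $\Phi\circ\dr$ is again a derivation whenever $\Phi$ is a zero order operator (Definition \ref{s14}) and $\dr$ a derivation. First I would verify this directly: applying $\Phi\circ\dr$ to a product $a\cdot b$ and using the Leibniz rule for $\dr$ gives $\Phi(\dr a\cdot b)+\Phi(a\cdot\dr b)$, and now the two-sided module property of $\Phi$ turns the first term into $\Phi(\dr a)\cdot b$ (via $\Phi(x\cdot y)=\Phi(x)\cdot y$) and the second into $a\cdot\Phi(\dr b)$ (via $\Phi(x\cdot y)=x\cdot\Phi(y)$), so that $(\Phi\circ\dr)(a\cdot b)=(\Phi\circ\dr)(a)\cdot b+a\cdot(\Phi\circ\dr)(b)$. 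Thus $\Phi\circ\dr$ obeys the Leibniz rule, hence is a derivation and in particular a first order differential operator with vanishing zero order part; this already settles one order of composition.

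For the reversed composition $\dr\circ\Phi$ I would write $\dr\circ\Phi=\Phi\circ\dr+[\dr,\Phi]$ and appeal to the bracket assertion: once $[\dr,\Phi]$ is known to be zero order, $\dr\circ\Phi$ is the sum of a derivation and a zero order operator, i.e. a first order operator in the sense of Definition \ref{s15}. So the whole lemma comes down to the bracket computation.

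To show $[\dr,\Phi]=\dr\circ\Phi-\Phi\circ\dr$ is zero order, i.e. an $(A-A)$-module endomorphism, I would evaluate the two compositions on $a\cdot b$ in two ways. Writing $\Phi(a\cdot b)=a\cdot\Phi(b)$ and using the Leibniz rule yields $(\dr\circ\Phi)(a\cdot b)=\dr a\cdot\Phi(b)+a\cdot(\dr\circ\Phi)(b)$, while the same manipulation of $\Phi\circ\dr$ gives $(\Phi\circ\dr)(a\cdot b)=\dr a\cdot\Phi(b)+a\cdot(\Phi\circ\dr)(b)$; subtracting, the cross terms $\dr a\cdot\Phi(b)$ cancel and leave $[\dr,\Phi](a\cdot b)=a\cdot[\dr,\Phi](b)$. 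Repeating the computation with the right module identity $\Phi(a\cdot b)=\Phi(a)\cdot b$ produces the cross term $\Phi(a)\cdot\dr b$ in both compositions, whose cancellation yields $[\dr,\Phi](a\cdot b)=[\dr,\Phi](a)\cdot b$. Hence $[\dr,\Phi]$ is a two-sided module endomorphism of $A$, i.e. a zero order operator, which establishes the second assertion and, via the previous paragraph, completes the first.

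I expect the only real obstacle to be conceptual rather than computational: recognizing at the outset that $\Phi\circ\dr$ is literally a derivation is the step that makes both assertions fall out cheaply, and it hinges on $\Phi$ being a module map on both sides simultaneously, equivalently on the relation $ap=a\cdot p=-p\cdot a=-pa$ tying the left and right actions together. The remaining work is bookkeeping of the cross terms. Notably no division by integers and no finiteness hypothesis enters, so the argument remains valid over an arbitrary commutative ring $\cK$.
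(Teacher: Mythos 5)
Your proposal is correct and follows essentially the same route as the paper: first showing that $\Phi\circ\dr$ is itself a derivation (hence first order), then showing by the same cancellation of cross terms that the bracket $\dr\circ\Phi-\Phi\circ\dr$ is a zero order operator, and finally writing $\dr\circ\Phi$ as a derivation plus a zero order operator. The only cosmetic difference is that you verify both the left and right module identities for the bracket, whereas the paper checks only the left one (the right one following from antisymmetry of the Lie product); this extra check is harmless.
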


\begin{proof} Given a zero order differential operator $\Phi$ on $A$ and a derivation $\dr$ of $A$,
we have the equalities
\mar{s31,2}\ben
&& (\Phi\circ\dr)(a\cdot b)=\Phi(\dr a\cdot b +a\cdot\dr b)= (\Phi(\dr
a))\cdot b + a\cdot \Phi(\dr b), \label{s31}\\
&& (\Phi\circ\dr - \dr\circ\Phi)(a\cdot b)= \Phi(\dr a\cdot b) +\Phi(a\cdot\dr
b) - \dr a\cdot\Phi(b) - a\cdot\dr(\Phi(b))= \label{s32}\\
&& \qquad a(\Phi\circ\dr - \dr\circ\Phi)(b). \nonumber
\een
The equality (\ref{s31}) shows that $\Phi\circ\dr$ is a derivation
of $A$. By virtue of the equality (\ref{s32}), $\Phi\circ\dr -
\dr\circ\Phi$ is a zero order differential operator and,
consequently, $\dr\circ\Phi$ is a first order differential
operator in accordance with Definition \ref{s15}.
\end{proof}

Note that the equality (\ref{s32}) is not trivial if zero order
differential operators and derivations do not commute with each
other. This is the case of a Lie algebra $\cV(Y)$ of vertical
vector fields on a fibre bundle $\pi:Y\to X$ in Example \ref{s6a}.
Zero order differential operators on this Lie algebra are the
morphisms
\be
\cV(Y)\ni u\to fu\in \cV(Y), \qquad \forall f\in \pi^*C^\infty(X),
\ee
i.e., $\dif_0(\cV(Y))=C^\infty(X)$. They do not commute with the
non-inner derivations of $\cV(Y)$ which are non-vertical
projectable vector fields on $Y$.

Definitions \ref{s14} and \ref{s15} of zero and first order
differential operators on Lie algebras are similar to those of
zero and first order differential operators on a commutative ring.
A difference is that all zero order differential operators on a
commutative ring are multiplications in this ring, while
multiplications (\ref{s1}) in a Lie algebra are first order
differential operators.

\begin{defi} \label{s20} \mar{s20} A differential operator of order $k>1$
on a Lie algebra $A$
is defined as a composition of $k$ first order differential
operators on $A$.
\end{defi}

Definition \ref{s20} is not exhausted. There exist different
morphisms of Lie algebras which can be treated as differential
operators as follows.

\begin{ex} \label{s16} \mar{s16}
Let $Y\to X$, coordinated by $(x^\m,y^i)$, be a vector bundle with
a structure group $U(k)$ which is provided with a constant fibre
metric $\eta$. Let $\cV(Y)$ be a Lie algebra of vertical vector
fields on $Y$ in Example \ref{s6a}. An endomorphism
\be
\cV(Y)\ni u^k\dr_k\to \eta^{ij}\dr_i\dr_j(u^k)\dr_k\in\cV(Y).
\ee
of $\cV(Y)$ as a real vector space is a second order differential
operator on a $C^\infty(Y)$-module $\cV(Y)$. This endomorphism
also may be regarded as a second order differential operator on a
Lie algebra $\cV(Y)$, but it fails to satisfy Definition
\ref{s20}.
\end{ex}

Due to Lemma \ref{s24}, it is easily justified that differential
operators on a Lie algebra in accordance with Definitions
\ref{s14}, \ref{s15} and \ref{s20} possess the following
properties.

\begin{lem} \label{s17} \mar{s17} A composition $\Delta\circ\Delta'$
of two differential operators $\Delta\in\dif_k(A)$ and
$\Delta'\in\dif_m(A)$ of order $k$ and $m$, respectively, is a
$(k+m)$-order differential operator.
\end{lem}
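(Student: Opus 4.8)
The plan is to reduce everything to Definition \ref{s20} by counting first order factors, treating the zero order operators as the only genuine source of difficulty. First observe that when $k,m\ge 1$ the statement is immediate: by Definitions \ref{s15} and \ref{s20} each of $\Delta$ and $\Delta'$ is a composition of, respectively, $k$ and $m$ first order differential operators, so $\Delta\circ\Delta'$ is a composition of $k+m\ge 2$ first order operators and hence a $(k+m)$-order operator by Definition \ref{s20}.

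The substantive content therefore lies in the cases where one of the factors has order zero, since composing with a zero order operator must be shown not to raise the order. If $k=m=0$, then $\Delta$ and $\Delta'$ are endomorphisms of the $(A-A)$-module $A$ in the sense of Definition \ref{s14}, and their composition is again such an endomorphism, so $\Delta\circ\Delta'\in\dif_0(A)$.

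Next I would treat the mixed case $k\ge 1$, $m=0$, writing $\Delta=D_1\circ\cdots\circ D_k$ with each $D_i=\dr_i+\Phi_i$ a first order operator and $\Delta'=\Phi$ a zero order operator. The idea is to absorb $\Phi$ into the last factor via $D_k\circ\Phi=\dr_k\circ\Phi+\Phi_k\circ\Phi$. By Lemma \ref{s24} the term $\dr_k\circ\Phi$ is a first order operator, while $\Phi_k\circ\Phi$ is a composition of two zero order operators and hence, by the case above, zero order; consequently $D_k\circ\Phi$ is itself a single first order operator. Then $\Delta\circ\Phi=D_1\circ\cdots\circ D_{k-1}\circ(D_k\circ\Phi)$ is a composition of $k$ first order operators, i.e.\ a $k$-order operator. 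The symmetric case $k=0$, $m\ge 1$ is handled the same way, now absorbing the zero order operator $\Phi=\Delta$ into the \emph{first} factor of $\Delta'=D_1\circ\cdots\circ D_m$ through $\Phi\circ D_1=\Phi\circ\dr_1+\Phi\circ\Phi_1$, where Lemma \ref{s24} guarantees that $\Phi\circ\dr_1$ is a derivation and thus a first order operator.

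The main obstacle is precisely this absorption step, and it is where both halves of Lemma \ref{s24} are needed: the statement that $\dr\circ\Phi$ is first order lets a zero order operator be swallowed on the right, while the statement that $\Phi\circ\dr$ is a derivation lets it be swallowed on the left. Once these are in place the order count is pure bookkeeping, and the general assertion follows by combining the four cases above.
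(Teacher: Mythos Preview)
Your argument is correct and is exactly the elaboration the paper intends: the paper does not write out a proof of this lemma at all, stating only that ``due to Lemma \ref{s24}, it is easily justified'' before listing Lemmas \ref{s17} and \ref{s18}. Your case split and the absorption of a zero order factor into an adjacent first order one via both conclusions of Lemma \ref{s24} is precisely the intended justification.
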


\begin{lem} \label{s18} \mar{s18} Given differential operators
$\Delta\in\dif_k(A)$ and $\Delta'\in\dif_m(A)$ of order $0<k$ and
$m\leq k$, respectively, their bracket $[\Delta,\Delta']$ is a
differential operator of order $m$.
\end{lem}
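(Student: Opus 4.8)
The plan is to reduce everything to the first-order building blocks furnished by Definitions \ref{s15} and \ref{s20}, and to propagate the order through the commutator by means of the Leibniz-type identity $[\Delta,\Delta'\circ\Delta'']=[\Delta,\Delta']\circ\Delta''+\Delta'\circ[\Delta,\Delta'']$, valid in the associative algebra $\hm_\cK(A,A)$, together with its analogue in the first argument. Combined with Lemma \ref{s17}, this permits a double induction on the pair $(k,m)$, in which one first-order factor is peeled off $\Delta$ or $\Delta'$ at each stage.

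First I would dispose of the base cases $k=1$. For $m=0$ the bracket is that of a first-order operator with a zero-order one; writing the first-order operator as $\dr+\Phi$ (Definition \ref{s15}) and expanding, both $[\dr,\Phi']$ and $[\Phi,\Phi']$ are zero order by Lemma \ref{s24} and by the closure of zero-order operators under composition (immediate from Definition \ref{s14}), so $[\Delta,\Delta']$ is zero order. For $m=1$, i.e. the bracket of two first-order operators $\dr+\Phi$ and $\dr'+\Phi'$, I expand $[\Delta,\Delta']=[\dr,\dr']+[\dr,\Phi']+[\Phi,\dr']+[\Phi,\Phi']$: the term $[\dr,\dr']$ is again a derivation (it lies in $\gd A$ by the definition of the Lie bracket on derivations), while the three remaining terms are zero order by Lemma \ref{s24}. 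Hence $[\Delta,\Delta']$ is a derivation plus a zero-order operator, that is, first order, as asserted.

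For the inductive step I would write a $k$-th order operator as $\Delta=D\circ\Delta_1$ with $D$ first order and $\Delta_1\in\dif_{k-1}(A)$, apply $[\Delta,\Delta']=D\circ[\Delta_1,\Delta']+[D,\Delta']\circ\Delta_1$, and feed the brackets $[\Delta_1,\Delta']$ and $[D,\Delta']$ into the induction hypothesis, afterwards recombining with $D$ and with $\Delta_1$ through Lemma \ref{s17}. When $m=k$ the hypothesis $m\le k-1$ fails for $[\Delta_1,\Delta']$, so in that case I would instead peel a first-order factor off $\Delta'$ and run the symmetric recursion, reducing $m$; the two reductions together cover all admissible pairs $(k,m)$.

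The delicate point, and the step I expect to be the main obstacle, is the bookkeeping of orders. Lemma \ref{s17} yields only the crude bound $k+m$ for each composition above, and a single use of the Leibniz identity lowers this to $k+m-1$; reaching the sharp value $m$ therefore demands that the top $k-1$ orders cancel completely. This is precisely where the Lie-specific feature recorded in Lemma \ref{s24} must be exploited: the composition of a zero-order operator with a derivation is again a derivation, so that the leading symbols assembled from derivations collapse rather than accumulate. I would accordingly organize the induction so that at each stage the highest-order part is expressed as an iterated bracket of derivations and rewritten via Lemma \ref{s24} until its order drops; verifying that this collapse is \emph{complete}, and does not halt one order too high, is the crux of the argument and the place where the estimate is most at risk.
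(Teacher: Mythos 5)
Your base cases $k=1$ are correct, and they reproduce what the paper itself supplies: by Lemma \ref{s24} and the closure of $\dif_0(A)$ under composition one gets $[\dif_1(A),\dif_0(A)]\subset\dif_0(A)$, and since $[\gd A,\gd A]\subset\gd A$, also $[\dif_1(A),\dif_1(A)]\subset\dif_1(A)$. The genuine gap is the inductive step, which you never actually carry out: after writing $[\Delta,\Delta']=D\circ[\Delta_1,\Delta']+[D,\Delta']\circ\Delta_1$ you concede that Lemma \ref{s17} bounds each term only by order $k+m-1$, so that the asserted order $m$ requires the top $k-1$ orders to ``cancel completely,'' and you flag this cancellation as the crux without giving any argument for it. (For comparison: the paper offers no proof of Lemma \ref{s18} either; it only asserts that the property ``is easily justified'' by Lemma \ref{s24}, so there is no hidden argument you failed to reconstruct.)

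That missing step cannot be repaired, because the cancellation genuinely fails: under Definitions \ref{s14}, \ref{s15} and \ref{s20} the statement already breaks at $k=2$, $m=1$. Take $A=sl(2,\Bbb R)$ from the paper's Example \ref{s3}, with basis $\{e,f,h\}$, $[h,e]=2e$, $[h,f]=-2f$, $[e,f]=h$; here all derivations are inner and $\dif_0(A)=\Bbb R\,\id$ (as stated after Definition \ref{s14}), so $\dif_1(A)=\{\dr_x+\la\,\id,\ x\in A,\ \la\in\Bbb R\}$. Put $\Delta=\dr_e\circ\dr_e\in\dif_2(A)$ and $\Delta'=\dr_f\in\dif_1(A)$. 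Your own Leibniz identity gives
\be
[\Delta,\Delta']=\dr_e\circ[\dr_e,\dr_f]+[\dr_e,\dr_f]\circ\dr_e
=\dr_e\circ\dr_h+\dr_h\circ\dr_e=T,
\ee
and direct evaluation yields $T(e)=0$, $T(f)=-2h$, $T(h)=-4e$. If $T$ were $\dr_x+\la\,\id$ with $x=\al e+\bt f+\g h$, then $T(e)=0$ forces $\bt=0$ and $\la=-2\g$; $T(f)=-2h$ forces $\al=-2$ and $\la=2\g$, hence $\g=\la=0$ and $x=-2e$; but then $T(h)=[-2e,h]=4e\neq-4e$. So $[\Delta,\Delta']\notin\dif_1(A)$: the bracket of a second order operator with a first order one need not be first order. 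What your induction does prove (allowing sums of compositions, which the paper's Definition \ref{s20} strictly speaking does not) is the weaker bound $[\dif_k(A),\dif_m(A)]\subset\dif_{k+m-1}(A)$, and the example above shows that this bound is sharp; the sharp value $m$ claimed by Lemma \ref{s18} is not attainable beyond $k=1$.
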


\begin{ex} \label{s42} \mar{s42}
Differential operators on the Lie algebras $gl(k,\Bbb R)$ and
$o(k)$ in Example \ref{s3} are exhausted by compositions of
multiplications in these algebras.
\end{ex}

\begin{ex} \label{s90} \mar{s90} Let $\cT_1(X)$ be a Lie algebra
of vector fields on a manifold $X$ in Example \ref{s6a}. In
accordance with Definition \ref{s15}, a first order differential
operator $\Delta$ on $\cT_1(X)$ takes the form
\mar{s91}\beq
\Delta(v)=[u,v] +\la v, \qquad \forall v\in\cT_1(X), \label{s91}
\eeq
where $u\in \cT_1(X)$ and $\la\in \Bbb R$. Accordingly, a
$k$-order differential operator on $\cT_1(X)$ is a composition of
$k$ first order operators (\ref{s91}). At the same time,
$\cT_1(X)$ is a $C^\infty(X)$-module. By virtue of Definition
\ref{s93}, zero order differential operator on a
$C^\infty(X)$-module $\cT_1(X)$ are its endomorphisms, i.e,
\be
\dif_0(\cT_1(X),\cT_1(X))=\hm_{C^\infty(X)}(\cT_1(X),\cT_1(X)).
\ee
A first order differential operator $\Delta$ on a
$C^\infty(X)$-module $\cT_1(X)$ obeys the condition (\ref{s95})
which reads
\be
\Delta(f v)=u(f)v +f\Delta(v), \qquad \forall v\in\cT_1(X), \qquad
\forall f\in C^\infty(X),
\ee
where $u\in \cT_1(X)$. Namely, we have
\be
\Delta(v)=\nabla_u v, \qquad \Delta(f v)=u(f)v +f\nabla_uv, \qquad
\forall v\in \cT_1(X), \qquad \forall f\in C^\infty(X),
\ee
where $\nabla_u$ is a covariant derivative along a vector field
$u\in \cT_1(X)$ with respect to some linear connection on the
tangent bundle $TX$ (cf. the formula (\ref{s91})).
\end{ex}

\begin{ex} \label{s100} \mar{s100} Since zero order differential
operators on a CCR algebra in Example \ref{s11} are its
multiplications (\ref{s35}), higher order differential operators
on this Lie algebra are exhausted by compositions of derivations
(\ref{s99}).
\end{ex}

\section{Differential operators on modules over Lie algebras}

Let a $\cK$-module $P$ be a left module over a Lie algebra $A$
which acts on $P$ by endomorphisms
\be
A\times P\ni (a,p)\to ap\in P, \qquad (a\cdot b)p=[a,b]p= (a\circ
b- b\circ a)p, \qquad \forall a,b\in A.
\ee
In physical application, one can think of $P$ as being a carrier
space of a representation of $A$. Therefore, we consider
$P$-valued differential operators on $P$.

\begin{defi} \label{s21} \mar{s21}
A zero order differential operator $\Phi$ on a module $P$ is its
$A$-module endomorphism, i.e.,
\be
\Phi(ap)=a\Phi(p), \qquad \forall a\in A, \qquad \forall p\in\Phi.
\ee
\end{defi}

It may happen that zero order differential operators on $P$ are
exhausted by multiplications
\be
P\ni p\to \la p\in P, \qquad \forall \la\in \cK.
\ee
For instance, this is the case of Lie algebras $gl(k,\Bbb R)$ and
$o(k)$ in Example \ref{s3} acting in $\Bbb R^k$.

\begin{defi} \label{s25} \mar{s25} A first order differential operator on an
$A$-module $P$ is defined to be a $\cK$-module endomorphism
$\Delta$ of $P$ which obeys the relation
\mar{s23}\beq
\Delta(ap)=(\dr a)\Phi(p) + a\Delta(p), \qquad \forall a\in A,
\qquad \forall p\in P, \label{s23}
\eeq
where $\dr$ is a derivation of a Lie algebra $A$ and $\Phi$ is a
zero order differential operator on $P$.
\end{defi}

For instance, a multiplication
\mar{s22}\beq
P\ni p\to ap\in P, \qquad a\in A, \label{s22}
\eeq
is a first order differential operator on $P$ which satisfies the
condition (\ref{s23}):
\mar{s40}\beq
b(ap)= [b,a]p + a(bp)=(\dr_ba)p +a(bp), \qquad \forall a,b\in A,
\qquad \forall p\in P. \label{s40}
\eeq

If $P=A$, a first order differential operator on a Lie algebra $A$
also is that on a left $A$-module $A$ in accordance with
Definition \ref{s25}, and {\it vice versa}.

Obviously, compositions of zero order and first order differential
operators on $P$ are first order differential operators on $P$.
However, the bracket of a zero order differential operator and a
first differential operator need not be a zero order differential
operator, and the bracket of two first order differential
operators is note necessarily the first order one.

Definitions \ref{s21} and \ref{s25} of zero and first order
differential operators on modules over a Lie algebra are similar
to those of zero and first order differential operators on modules
over a commutative ring. An essential difference is that, in the
case of a ring, the multiplications (\ref{s22}) are zero order
differential operators on $P$.

\begin{defi} \label{s36} \mar{s36} A differential operator of order $k>1$
on an $A$-module $P$ is defined as a composition of $k$ first
order differential operators on $P$.
\end{defi}

By very definition, a composition of two differential operators of
order $k$ and $m$ is a $(k+m)$-order differential operator.

\begin{ex} \label{s43} \mar{s43}
Differential operators on a module $\Bbb R^k$ over a Lie algebra
$gl(k,\Bbb R)$ in Example \ref{s3} are exhausted by action of
elements of the universal enveloping algebra of $gl(k,\Bbb R)$.
\end{ex}

\begin{ex} \label{s41} \mar{s41} Let $Y\to X$ be a fibre bundle and $\cV(Y)$ the Lie
algebra of vertical vector fields on $Y$ in Example \ref{s6a}. A
ring $C^\infty(Y)$ of smooth real functions on $Y$ is provided
with a structure of a left $\cV(Y)$-module with respect to
morphism
\be
&& \cV(Y)\times C^\infty(Y)\ni (u,f)\to uf=u\rfloor df=u^i\dr_i f\in
C^\infty(Y), \\
&& (u\circ v - v\circ u)f=([u,v])f, \qquad \forall u, v\in \cV(Y), \qquad
\forall f\in C^\infty(Y).
\ee
Zero order differential operators on a $\cV(Y)$-module
$C^\infty(Y)$ are $C^\infty(X)$-module endomorphisms of
$C^\infty(Y)$. First order differential operators are exemplified
by $\Bbb R$-module endomorphisms
\be
C^\infty(Y)\ni f\to u\rfloor df\in C^\infty(Y),
\ee
where $u$ is a projectable vector field on $Y$. Note that zero
order differential operators on a $C^\infty(X)$-ring $C^\infty(Y)$
are exhausted by its multiplications, while first order
differential operators on this ring are the morphisms
\be
C^\infty(Y)\ni \f\to u\rfloor df + \f\in C^\infty(Y), \qquad u\in
\cV(Y), \qquad \f\in C^\infty(Y).
\ee
For instance, let $Y\to X$ be a fibre bundle in Example \ref{s16}.
Then a $C^\infty(X)$-ring $C^\infty(Y)$ admits a second order
differential operator
\be
C^\infty(Y)\ni f\to \eta^{ij}\dr_i\dr_jf\in C^\infty(Y).
\ee
This endomorphism also may be regarded as a second order
differential operator on a $\cV(Y)$-module $C^\infty(Y)$, but it
does not satisfy Definition \ref{s36}.
\end{ex}

\begin{ex} \label{s101} \mar{s101} Let us consider a CCR algebra
$W=\Bbb R\oplus\Bbb R\oplus \Bbb R$ in Example \ref{s11}. Given
its basis $(e_\pi,\e_\f)$, the bracket (\ref{s9}) takes the form
\be
[w,w']= (\pi\f'-\pi'\f)\bb, \qquad w=\pi e_\pi +\f e_\f +\la\bb,
\qquad w'=\pi' e_\pi +\f' e_\f +\la'\bb.
\ee
The generic derivation (\ref{s99}) of this Lie algebra reads
\mar{s104}\beq
\dr(\pi e_\pi +\f e_\f +\la\bb)= (M\pi +O_1\f)e_\pi + (-M\f
+O_2\pi)e_\f +(C_1\pi +C_2\f)\bb, \label{s104}
\eeq
where $M,O_1,O_2,C_1,C_2$ are real numbers. Let us consider a ring
$C^\infty(\Bbb R)$ of smooth real functions $f(x)$ on $\Bbb R$. It
is provided with the structure of a $W$-module with respect to the
action
\mar{s103}\beq
w(f)=(\pi\frac{\dr}{\dr x} +\f x +\la)f, \qquad w=\pi e_\pi +\f
e_\f +\la\bb. \label{s103}
\eeq
The action (\ref{s103}) exemplifies the first order differential
operator (\ref{s22}) on a $W$-module $C^\infty(\Bbb R)$. A generic
first order differential operator on this $W$-module reads
\mar{s105}\beq
\Delta=O_1\frac{\dr^2}{\dr x^2} + (C_2-Mx)\frac{\dr}{\dr x}
-(C_1+\frac12O_2x)x+\la, \label{s105}
\eeq
where $M,O_1,O_2,C_1,C_2,\la$ are real numbers. The differential
operator (\ref{s105}) satisfies the condition
\be
\Delta(w f)=(\dr w)f +w(\Delta(f)),
\ee
where $\dr$ is the derivation (\ref{s104}). Any $k$-order
differential operator on a $W$-module $C^\infty(\Bbb R)$ is a
composition of the first order differential operators
(\ref{s105}).
\end{ex}

\section{Differential calculus on Lie algebras}

As was mentioned above, we follow the notion of the
Chevalley--Eilenberg differential calculus over a ring
\cite{dub01,book05,sard9}.

\begin{defi}
Let $\cK$ be a commutative ring, $\cA$ a commutative $\cK$-ring,
$\cZ[\cA]$ the center of $\cA$, and $\gd\cA$ a Lie $\cK$-algebra
of derivations of $\cA$. Let us consider the Chevalley--Eilenberg
complex of $\cK$-multilinear morphisms of $\gd\cA$ to $\cA$, seen
as a $\gd\cA$-module \cite{fuks,book05,sard9}. Its subcomplex of
$\cZ[\cA]$-multilinear morphisms is a differential graded algebra,
called the Chevalley--Eilenberg differential calculus over $\cA$.
\end{defi}

By analogy with this definition, let us construct the
Chevalley--Eilenberg differential calculus over a Lie algebra as
follows.

\begin{defi} \label{s50} \mar{s50}
Let $\cK$ be a commutative ring, $A$ a Lie $\cK$-algebra, $\cZ[A]$
the center of $A$, and $\gd A$ a Lie $\cK$-algebra of derivations
of $A$. Let us consider the Chevalley--Eilenberg complex $C^*[\gd
A;A]$ of $\cK$-multilinear morphisms of $\gd A$ to $A$, seen as a
$\gd\cA$-module \cite{fuks,book05,sard9}. Its subcomplex $\cO^*A$
of $\cZ[A]$-multilinear morphisms $\cO^*A$ is called the
Chevalley--Eilenberg differential calculus over a Lie algebra $A$.
\end{defi}

For instance, let $A$ have no divisors of zero. i.e., $\cZ[A]$=0,
and let all derivations of $A$ be inner, i.e., $\gd A=A$. Then the
Chevalley--Eilenberg differential calculus $\cO^*A$ over $A$ in
Definition \ref{s50} coincides with the well-known
Chevalley--Eilenberg complex $C^*[A]$ of $A$-valued cochains on
$A$ \cite{fuks,book05,sard9}.

For the sake of simplicity, we restrict our consideration to a Lie
algebra $A$ without devisors of zero. In this case, $A$ is an
invariant subalgebra of $\gd A$, and we denote an action of $\gd
A$ on $A$ as
\mar{s52}\beq
\gd A\times A\ni (\ve,a)\to \ve(a)=\ve\cdot a\in A\subset\gd A.
\label{s52}
\eeq
A $\cK$-multilinear skew-symmetric map
\be
c^k:\op\times^k\gd A\to A
\ee
is called an $A$-valued $k$-cochain on the Lie algebra $\gd A$.
These cochains form a $\gd A$-module $C^k[\gd A;A]$ with respect
to an action
\mar{s60}\beq
\ve: c^k\to \ve\cdot c^k, \qquad [\ve,\ve']=\ve\cdot\ve',\qquad
\forall \ve,\ve'\in\gd A, \label{s60}
\eeq
Let us put $C^0[\gd A;A]=A$. We obtain the cochain complex
\mar{spr997}\beq
0\to A\ar^{\dl^0} C^1[\gd A;A]\ar^{\dl^1} \cdots C^k[\gd A;A]
\ar^{\dl^k} \cdots, \label{spr997}
\eeq
with respect to the Chevalley--Eilenberg coboundary operators
\mar{spr132}\ben
&& \dl^kc^k (\ve_0,\ldots,\ve_k)=\op\sum_{i=0}^k(-1)^i\ve_i\cdot c^k(\ve_0,\ldots,
\wh\ve_i, \ldots, \ve_k)+ \label{spr132}\\
&& \qquad \op\sum_{1\leq i<j\leq k}
(-1)^{i+j}c^k(\ve_i\cdot\ve_j, \ve_0,\ldots, \wh\ve_i, \ldots,
\wh\ve_j,\ldots, \ve_k), \nonumber
\een
where the caret $\,\wh{}\,$ denotes omission \cite{fuks}.

In particular, we have
\mar{spr135,6}\ben
&&(\dl^0a)(\ve_0)=\ve_0\cdot a=-\dr_a(\ve_0), \qquad \forall a\in A=C^0[\gd A;A], \label{spr135} \\
&& (\dl^1 c^1)(\ve_0,\ve_1)=\ve_0\cdot c^1(\ve_1)-\ve_1\cdot c^1(\ve_0) -
c^1(\ve_0\cdot\ve_1). \label{spr136}
\een
A glance at the expression (\ref{spr136}) shows that a one-cocycle
$c^1$ on $\gd A$ obeys the relation
\be
c^1(\ve_0\cdot \ve_1)=c^1(\ve_0)\cdot \ve_1+ \ve_0\cdot c^1(\ve_1)
\ee
and, thus, it is an $A$-valued derivation of the Lie algebra $\gd
A$. Accordingly, any one-coboundary (\ref{spr135}) is an inner
derivation $-\dr_a$, $a\in A\subset \gd A$ of $\gd A$.

\begin{lem} \label{s51} \mar{s51}
Any $A$-valued derivation $\dr$ of $\gd A$ is inner, i.e.,
$\dr=\dr_\ve$, $\ve\in\gd A$.
\end{lem}

\begin{proof} If $\dr$ does not vanish on $A\subset \gd A$, it is
an element of $\gd A$ by very definition of $\gd A$. Let
$\dr(a)=0$ for all $a\in A\subset \gd A$. Then we have
\be
0=\dr(\ve\cdot a)=\dr(\ve)\cdot a +\ve\cdot\dr(a)=\dr(\ve)\cdot a,
\qquad \forall a\in A, \qquad \forall \ve\in\gd A.
\ee
Consequently, $\dr(\ve)$ is a zero derivation of $A$, i.e., it is
a zero element of $\gd A$.
\end{proof}

It follows from Lemma \ref{s51} that there is a monomorphism
$C^0[\gd A;A]\to \gd A$, i.e., any one-cocycle $c^1$ is an element
of $\gd A$ such that $c^1(\ve)=\ve\cdot c^1$. Accordingly, one can
think of the cohomology $H^1(\gd A;A)$ of the complex
(\ref{spr997}) as being the set of non-inner derivations of $A$
whose bracket with any derivation of $A$ are inner derivations of
$A$.

In particular, if $\gd A=A$, then the first cohomology of the
complex (\ref{spr997}) is trivial.

\begin{ex}  \label{s55} \mar{s55}
Let $A$ be an $N$-dimensional real Lie algebra provided with a
basis $\{a_m\}$. Let us assume that it has no divisors of zero and
that all its derivations are inner, i.e., $\gd A=A$. Then its
Chevalley--Eilenberg differential calculus (\ref{spr997})
coincides with the Chevalley--Eilenberg complex
\mar{s70}\beq
0\to A\ar^{\dl^0} C^1[A]\ar^{\dl^1} \cdots C^N[A] \label{s70}
\eeq
of a Lie algebra $A$. In particular, the Chevalley--Eilenberg
coboundary operator $\dl^0$ (\ref{spr135}) takes the form
\mar{s56}\beq
(\dl^0a_m)(a_n)=c^r_{nm}a_r, \label{s56}
\eeq
where $c^r_{nm}$ are structure constants. Since one-cocycles
$c^1\in C^1[A]$ are endomorphisms of a vector space $A$, they can
be represented by elements of the tensor product $A\ot A^*$, where
$A^*$ is a coalgebra of $A$. Let $\{\th^r\}$ be the dual basis for
$A^*$ and, accordingly, $\{a_m\ot\th^r\}$ a basis for the vector
space $B^1[A]$ of one-cocycles. In particular, we have
\mar{s63,2}\ben
&& (\dl^0a_m)= c^r_{nm}a_r\ot\th^n, \label{s63}\\
&& \dl^1 (a_m\ot\th^r)=c^n_{pm}a_n\ot (\th^p\w\th^r) -\frac12
c^r_{pn}a_m\ot (\th^p\w\th^n). \label{s62}
\een
One can think of the formula (\ref{s62}) as the Maurer--Cartan
equation. For instance, let $A=sl(k,\Bbb R)$. Then a glance at the
expressions (\ref{s56}) -- (\ref{s62}) shows that the
Chevalley--Eilenberg differential calculus (\ref{s70}) describes
the matrix geometry of a Lie algebra $sl(k,\Bbb R)$ \cite{dub90}.
\end{ex}

\section{Differential operators on
graded Lie algebras}

As was mentioned above, the notion of differential operators and
the differential calculus on a graded commutative ring is a
straightforward generalization of that of differential operators
and differential calculus on a commutative ring
\cite{bart,fuks,book05,sard9}. A difference lies in a definition
of derivation of a graded commutative ring.

Let $\cK$ be a commutative ring. An associative $\cK$-algebra
$\cA$ is called graded if it is endowed with a grading
automorphism $\g$ such that $\g^2=\id$. A graded algebra seen as a
$\Bbb Z_2$-module falls into the direct sum $\cA=\cA_0\oplus
\cA_1$ of two $\Bbb Z$-modules $\cA_0$ and $\cA_1$ of even and odd
elements such that
\be
\g(a)=(-1)^ia, \qquad a\in\cA_i, \qquad i=0,1.
\ee
One calls $\cA_0$ and $\cA_1$ the even and odd parts of $\cA$,
respectively. Since $\g(aa')=\g(a)\g(a')$, we have
\be
[aa']=([a]+[a']){\rm mod}\,2
\ee
where $a\in \cA_{[a]}$, $a'\in \cA_{[a']}$. If $A$ is a ring, then
$[\bb]=0$. A graded algebra $\cA$ is said to be graded commutative
if
\be
aa'=(-1)^{[a][a']}a'a,
\ee
where $a$ and $a'$ are arbitrary homogeneous elements of $\cA$,
i.e., they are either even or odd.

An endomorphism $\dr$ of a graded $\cK$-module $A$ is called a
graded derivation of $A$ if it obeys the graded Leibniz rule
\mar{ws10}\beq
\dr(ab) = \dr(a)b + (-1)^{[a][\dr]}a\dr(b), \qquad \forall a,b\in
\cA. \label{ws10}
\eeq
Graded derivations constitute a graded Lie algebra over a ring
$\cK$ with respect to the graded bracket
\mar{ws14}\beq
[\dr,\dr']=\dr\circ \dr' - (-1)^{[\dr][\dr']}\dr'\circ \dr.
\label{ws14}
\eeq

Let $\Phi\in\hm_\cK(\cA,\cA)$ be an endomorphism of a graded
$\cK$-module $\cA$. Given an element $a\in \cA$, let us define an
endomorphism
\be
(\dl_a\Phi)(c)= a\Phi(c) - (-1)^{[\Phi][a]}\Phi(ac), \qquad
\forall c\in\cA,
\ee
of a $\cK$-module $\cA$, and its endomorphism
$\dl_{a_0}\circ\cdots\circ\dl_{a_k}\Phi$ for any tuple of elements
$a_0,\ldots,a_k$ of $\cA$. Then a definition of differential
operators on a graded commutative ring $\cA$ is a repetition of
Definition \ref{ws131}.

Considering differential operators on graded Lie algebras, one
meets a problem similar to that for Lie algebras. Namely, their
graded derivations $\dr$ do not obey the condition
$\dl_{a_0}\circ\dl_{a_1}\dr=0$.

Let $\cK$ be a commutative ring. A $\Bbb Z_2$-graded
non-associative $\cK$-algebra $A$ (a Lie superalgebra) is called a
graded Lie algebra if its multiplication obeys the relations
\be
&& a\cdot a'=-(-1)^{[a][a']}a'\cdot a,\\
&& (-1)^{[a][a'']}a\cdot a'\cdot a''
+(-1)^{[a'][a]}a'\cdot a''\cdot a + (-1)^{[a''][a']}a''\cdot
a\cdot a' =0.
\ee
Obviously, the even part $A_0$ of a graded Lie $\cK$-algebra $A$
is a Lie $\cK$-algebra. A graded $\cK$-module $P$ is called an
$A$-module if it is provided with a $\cK$-bilinear map
\be
&& A\times P\ni (a,p)\to a p\in P, \qquad [a
p]=([a]+[p]){\rm mod}\,2,\\
&& (a\cdot a')p=(a\circ a'-(-1)^{[a][a']}a'\circ a)p.
\ee

An endomorphism $\dr$ of a graded $\cK$-module $A$ is said to be a
graded derivation of $A$ if it satisfies the graded Leibniz rule
(\ref{ws10}):
\be
\dr(a\cdot b) = \dr(a)\cdot b + (-1)^{[a][\dr]}a\cdot \dr(b),
\qquad \forall a,b\in \cA.
\ee
Graded derivations of a graded Lie algebra $A$ form a graded Lie
$\cK$-algebra $\gd A$ with respect to the graded bracket
(\ref{ws14}).

With this notion of a graded derivation, one can follow
Definitions \ref{s14} -- \ref{s20} and Definition \ref{s50} in
order to describe differential operators and the differential
calculus on a graded Lie algebra.

\begin{ex}  \label{s120} \mar{s120}
Let $A$ be an $N$-dimensional real graded Lie algebra provided
with a basis $\{a_m\}$. Let us assume that it has no divisors of
zero and that all its derivations are inner, i.e., $\gd A=A$. Then
its graded Chevalley--Eilenberg differential calculus coincides
with the graded Chevalley--Eilenberg complex
\mar{s121}\beq
0\to A\ar^{\dl^0} C^1[A]\ar^{\dl^1} \cdots \label{s121}
\eeq
of a graded Lie algebra $A$ \cite{fuks}. In particular, the graded
Chevalley--Eilenberg coboundary operator $\dl^0$ takes the form
\be
(\dl^0a_m)(a_n)=(-1)^{[a_m]}c^r_{nm}a_r,
\ee
where $c^r_{nm}$ are structure constants. Since one-cocycles
$c^1\in C^1[A]$ are endomorphisms of a graded vector space $A$,
they can be represented by elements of the tensor product $A\ot
A^*$, where $A^*$ is a coalgebra of $A$. Let $\{\th^r\}$ be the
dual basis for $A^*$ and, accordingly, $\{a_m\ot\th^r\}$ a basis
for the graded vector space $B^1[A]$ of one-cocycles. In
particular, we have
\be
&& (\dl^0a_m)= (-1)^{[a_m]}c^r_{nm}a_r\ot\th^n, \\
&& \dl^1 (a_m\ot\th^r)=(-1)^{[a_p]([a_p]+[a_m])}c^n_{pm}a_n\ot (\th^p\w\th^r) -\frac12
c^r_{pn}a_m\ot (\th^p\w\th^n),\\
&& \th^p\w\th^n=-(-1)^{[a_p][a_n]}\th^n\w\th^p.
\ee
Thus, one can think of the graded Chevalley--Eilenberg
differential calculus (\ref{s121}) by analogy with Example
\ref{s55} as describing a matrix geometry of a graded Lie algebra
$A$.
\end{ex}

\begin{ex} Let $Y\to X$ be a vector space, and let $(X,\gA)$ be a
graded manifold whose body is $X$ and whose structure sheaf $\gA$
of graded functions is a sheaf of sections of the exterior bundle
\be
\w Y^*=(X\times \Bbb R) \oplus Y^*\oplus\op\w^2 Y^*\oplus\cdots
\ee
where $Y^*$ is the dual of $Y$ \cite{bart,book05,sard9}. Let
$\gd\gA$ be the sheaf of graded derivations of $\gA$. Its sections
are graded derivations of the graded commutative ring $\w Y^*(X)$
of sections of the exterior bundle $\w Y^*$. They are called
graded vector fields on a graded manifold $(X,\gA)$. Given a
trivialization chart $(U;x^\m,y^a)$ of $Y$ and the corresponding
local basis $(x^\m,c^a)$ for $(X,\gA)$, graded vector fields read
\mar{hn14}\beq
u= u^\m\dr_\m + u^a\frac{\dr}{\dr c^a}, \label{hn14}
\eeq
where $u^\la, u^a$ are local graded functions on $U$. Graded
vector fields (\ref{hn14}) form a graded Lie algebra $\gd[\w
Y^*(X)]$ with respect to the graded Lie bracket
\be
[u,u']=u\circ u' -(-1)^{[u] [u']} u'\circ u.
\ee
In particular,
\be
\frac{\dr}{\dr c^a}\circ\frac{\dr}{\dr c^b} =-\frac{\dr}{\dr
c^b}\circ\frac{\dr}{\dr c^a}, \qquad \dr_\m\circ\frac{\dr}{\dr
c^a}=\frac{\dr}{\dr c^a}\circ \dr_\m.
\ee
All graded derivations of the graded Lie algebra $A=\gd[\w
Y^*(X)]$ are inner. Since $A$ has no devisors of zero, we have
$\gd A=A$.
\end{ex}

\begin{ex} \label{s122} \mar{s122}
Let be $V$ be a real vector space. Let us consider the graded
vector space
\be
W=W_1\oplus W_0=(V\oplus V)\oplus \Bbb R.
\ee
We denote its elements by $w=(\pi,\f,\la)$ where $[\pi]=[\f]=1$.
Let $<v,v'>$ be a scalar product on $V$. Then one brings $W$ into
a real graded Lie algebra with respect to the bracket
\mar{s9a}\beq
w\cdot w'=[w,w']=[w',w]=(<\pi,\f'> + <\pi',\f>)\bb, \qquad \forall
w,w'\in W. \label{s9a}
\eeq
The graded Lie algebra (\ref{s9a}) is called the algebra of
canonical anticommutation relations (henceforth CAR). Its inner
derivations read
\be
\dr_w(\pi',\f',\la')= (0,0,<\pi,\f'> + <\pi',\f>), \quad
w=(\pi,\f,\la)\in W, \quad \forall (\pi',\f',\la')\in W.
\ee
A generic derivation of the CAR algebra (\ref{s9a}) takes the form
\mar{s123}\beq
\dr w=(M_1(\pi)+O_1(\f),O_2(\pi) +M_2(\f),C_1(\pi)+C_2(\f)),
\label{s123}
\eeq
where $C_1$ and $C_2$ are linear functions on $V$, and $M_1$,
$M_2$, $O_1$ and $O_2$ are endomorphisms of $V$ such that
\be
&& <M_1(v),v'>+ <v,M_2(v')>=0, \qquad \forall v,v'\in V,\\
&& <O_1(v),v'> +<v,O_1(v')>=0, \qquad <O_2(v),v'> +<v,O_2(v')>=0.
\ee
Let us consider a CAR algebra $W=\Bbb R\oplus\Bbb R\oplus \Bbb R$.
Given its basis $(e_\pi,\e_\f)$, the bracket (\ref{s9a}) takes the
form
\mar{s130}\beq
[w,w']= (\pi\f'+\pi'\f)\bb, \qquad w=\pi e_\pi +\f e_\f +\la\bb,
\qquad w'=\pi' e_\pi +\f' e_\f +\la'\bb. \label{s130}
\eeq
The generic derivation (\ref{s123}) of this Lie algebra reads
\mar{s126}\beq
\dr(\pi e_\pi +\f e_\f +\la\bb)= M\pi e_\pi -M\f e_\f +(C_1\pi
+C_2\f)\bb, \label{s126}
\eeq
where $M,C_1,C_2$ are real numbers. These derivations constitute a
graded Lie algebra $\gd W$. Let us consider a graded commutative
ring $\La$ generated by an odd element $c$. It is a Grassmann
algebra whose elements take the form
\be
h=h_0 +h_1 c, \qquad h_0, h_1\in \Bbb R.
\ee
A Grassmann algebra $\La$ is provided with a structure of a graded
module over the CAR algebra $W$ (\ref{s130}), which acts on $\La$
by the law
\be
&& w=\pi e_\pi +\f e_\f +\la\bb, \qquad h=h_0 +h_1 c,\\
&& w(h)=(\pi \frac{\dr}{\dr c} +\f c +\la)(h) =(\pi h_1+\la h_0) +
(\f h_0 +\la h_1)c.
\ee
This action exemplifies the first order differential operator on a
$W$-module $C^\infty(\Bbb R)$. A generic first order differential
operator on this $W$-module reads
\mar{s127}\beq
\Delta=(C_2-Mc)\frac{\dr}{\dr c} + C_1c+\la, \label{s127}
\eeq
where $M,C_1,C_2,\la$ are real numbers. The differential operator
(\ref{s127}) satisfies the condition
\be
\Delta(w h)=(\dr w)h +w(\Delta(h)),
\ee
where $\dr$ is the derivation (\ref{s126}). Any $k$-order
differential operator on a $W$-module $\La$ is a composition of
first order differential operators (\ref{s127}).
\end{ex}

\end{document}